\newtheorem{theorem}{Theorem}[]
\newtheorem{lemma}{Lemma}[]
\newtheorem{remark}{Remark}
\DeclareMathOperator*{\minimize}{minimize}
\definecolor{color1}{rgb}{0,0,0}
\def\BibTeX{{\rm B\kern-.05em{\sc i\kern-.025em b}\kern-.08em
    T\kern-.1667em\lower.7ex\hbox{E}\kern-.125emX}}
\begin{document}
\title{A Plug and Play Distributed Secondary Controller for Microgrids with Grid-Forming Inverters \thanks{
This research is supported by the United States Department of Energy via grant number DE-CR$0000040$.}}
\author{\IEEEauthorblockN{Vivek Khatana$^{1}$, Soham Chakraborty$^{2}$, and Murti V. Salapaka$^{3}$}
\IEEEauthorblockA{Department of Electrical and Computer Engineering, University of Minnesota, USA
\\ \{$^1$khata010, $^2$chakr138, $^3$murtis\}@umn.edu}}
\maketitle

\begin{abstract}
A distributed controller for secondary control problems in microgrids with grid-forming (GFM) inverter-based resources (IBRs) is developed. The controller is based on distributed optimization and is synthesized and implemented distributively enabling each GFM IBR to utilize decentralized measurements and the neighborhood information in the communication network. We present a convergence analysis establishing voltage regulation and reactive power sharing properties. A controller-hardware-in-the-loop experiment is conducted to evaluate the performance of the proposed controller. The experimental results corroborate the efficacy of the proposed distributed controller for secondary control.\\[1ex]
\textbf{\textit{Keywords---Microgrids, inverter-based resources, droop control, distributed control, secondary control, distributed optimization.} }
\end{abstract}

\IEEEpeerreviewmaketitle
\section{Introduction}
AC microgrids (MGs) consist of loads and inverter-based resources (IBRs) interfacing with renewable energy resources and energy storage via power electronics. Based on the nature and the expected functionality, these IBRs usually operate in a grid-following (GFL) or grid-forming (GFM) mode. IBRs operating in the GFM mode should generate and maintain stable voltages and frequency of the MG when disengaged from the main grid. MGs typically follow a hierarchical control structure by adapting ANSI/ISA-$95$-based international standard \cite{guerrero2010hierarchical}. The hierarchical control structure consists of three levels: primary, secondary, and tertiary control. The concept of droop control is utilized extensively at primary control for active and reactive power sharing \cite{chandorkar1993control}. However, for MG operation with GFM IBRs with only primary droop control, the voltage of the system deviates from its nominal values, and reactive power demand is not shared as desired. One of the main challenges at the secondary control level is the coordination of the GFM IBRs for restoring system voltages to the nominal value and sharing reactive power appropriately. 
\par Traditional centralized secondary control systems \cite{guerrero2010hierarchical, zhong2011robust, savaghebi2012secondary, micallef2014reactive} require global information aggregation and dissemination and don't scale well. Researchers have explored distributed secondary control schemes to reduce communication bandwidth requirements. Most distributed secondary control schemes include modification of the droop control with an additive corrective term (see \cite{shafiee2012distributed, shafiee2013distributed, shafiee2013robust, bidram2013secondary, dehkordi2016fully, simpson2013synchronization, simpson2015secondary, chen2015distributed, lu2016novel, lu2016distributed, ding2018distributed, zhang2024secondary} and the references therein). The distributed voltage controllers proposed in \cite{shafiee2012distributed, shafiee2013distributed, shafiee2013robust} require each GFM IBR to collect measurements of every GFM IBR in the network to produce the appropriate local control signal; leading to a huge communication bandwidth for every GFM IBR. Further, the private information of each GFM IBR is exposed to the entire network. In articles \cite{bidram2013secondary,dehkordi2016fully} communication networks with a directed spanning tree are considered. Further, to generate voltage control set-points, a leader-follower synchronization scheme is utilized that is susceptible to a single point of failure. Articles \cite{simpson2013synchronization, simpson2015secondary} developed distributed proportional and integral (PI) controllers for frequency and voltage regulation via averaging dynamics based on a balanced positive definite Laplacian matrix. Although the PI controller can be implemented in a distributed manner it requires a centralized synthesis. Articles \cite{chen2015distributed, lu2016novel} propose a centrally synthesized non-linear non-smooth distributed control signal based on a leader-follower synchronization and suffers from the same limitations as \cite{bidram2013secondary,simpson2013synchronization}. The work in article \cite{lu2016distributed} considers discrete-time updates with non-uniform communication intervals for a virtual leader-follower distributed PI controller. Article \cite{ding2018distributed} utilizes an event-triggered communication approach to reduce the utilization of the communication. However, the controller requires global information for synthesis and is unsuitable for plug-and-play operation. Article \cite{zhang2024secondary} presents an optimization problem for the design of a sparsity-promoting Laplacian matrix utilized in the centrally synthesized controllers schemes in \cite{simpson2013synchronization,simpson2015secondary}. Manipulating the virtual impedance (VI) in the control loop provides another approach for controlling reactive power sharing among GFM IBRs (see \cite{zhu2015virtual, hu2018decentralised, zhang2016distributed} and references therein). In \cite{zhu2015virtual}, the VI parameters are tuned via a genetic algorithm, introduced to minimize the system's global reactive power sharing error. Article \cite{hu2018decentralised} utilizes a VI proportional to the reactive power. The VI design in \cite{zhu2015virtual, hu2018decentralised} is performed offline in a centralized MG configuration stage. The scheme in article \cite{zhang2016distributed} considers an adaptive VI designed via a balanced Laplacian matrix.  \\ Existing secondary control schemes in the literature do not prescribe a distributed synthesis for controllers and hence do not have a plug-and-play operation. Here, we propose a novel secondary control scheme for reactive power sharing and voltage regulation among heterogeneous GFM IBRs. The major
contributions of this article are:\\
$1)$ A distributed secondary control strategy for voltage regulation and reactive power sharing is proposed based on a distributed optimization framework. Compared to the existing works in the literature the proposed scheme allows for a fully distributed synthesis and a plug-and-play operation. \\
$2)$ The proposed scheme provides reactive power sharing and voltage regulation among heterogeneous distributed GFM IBRs connected through varying line impedances.\\
$3)$ We utilize an online algorithm using only local measurements to lower the communication bandwidth requirements. Additionally, only non-linear estimates are shared between the GFM IBRs during updates, safeguarding private information such as real-time power measurement, maximum capacity of generations, and local load demands, which is beneficial to enhance the protection of information privacy against malicious cyber-attacks.\\
We evaluate the proposed distributed secondary control scheme using a controller hardware-in-the-loop (CHIL)-based real-time simulation. The laboratory test results establish the efficacy of the proposed scheme for secondary control in MGs.
\section{Brief Description of Secondary Control} \label{sec:motivation}
\begin{figure}[b]
    \centering
    \subfloat[]{\includegraphics[scale=0.26,trim={0cm 0cm 16.5cm 5.8cm},clip]{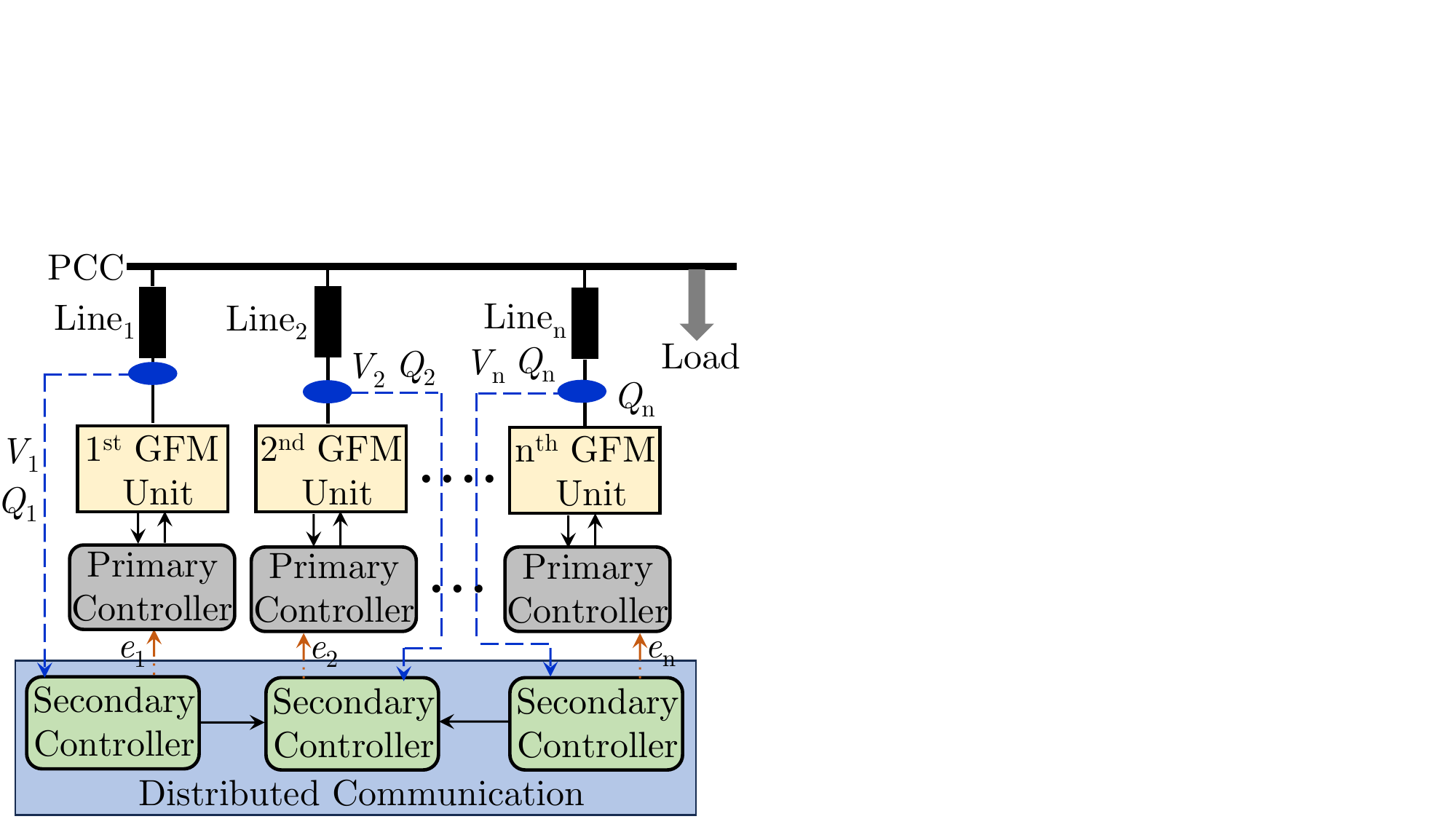}}%
    \label{fig:system}~
    \subfloat[]{\includegraphics[scale=0.27,trim={0cm 0cm 18.5cm 6cm},clip]{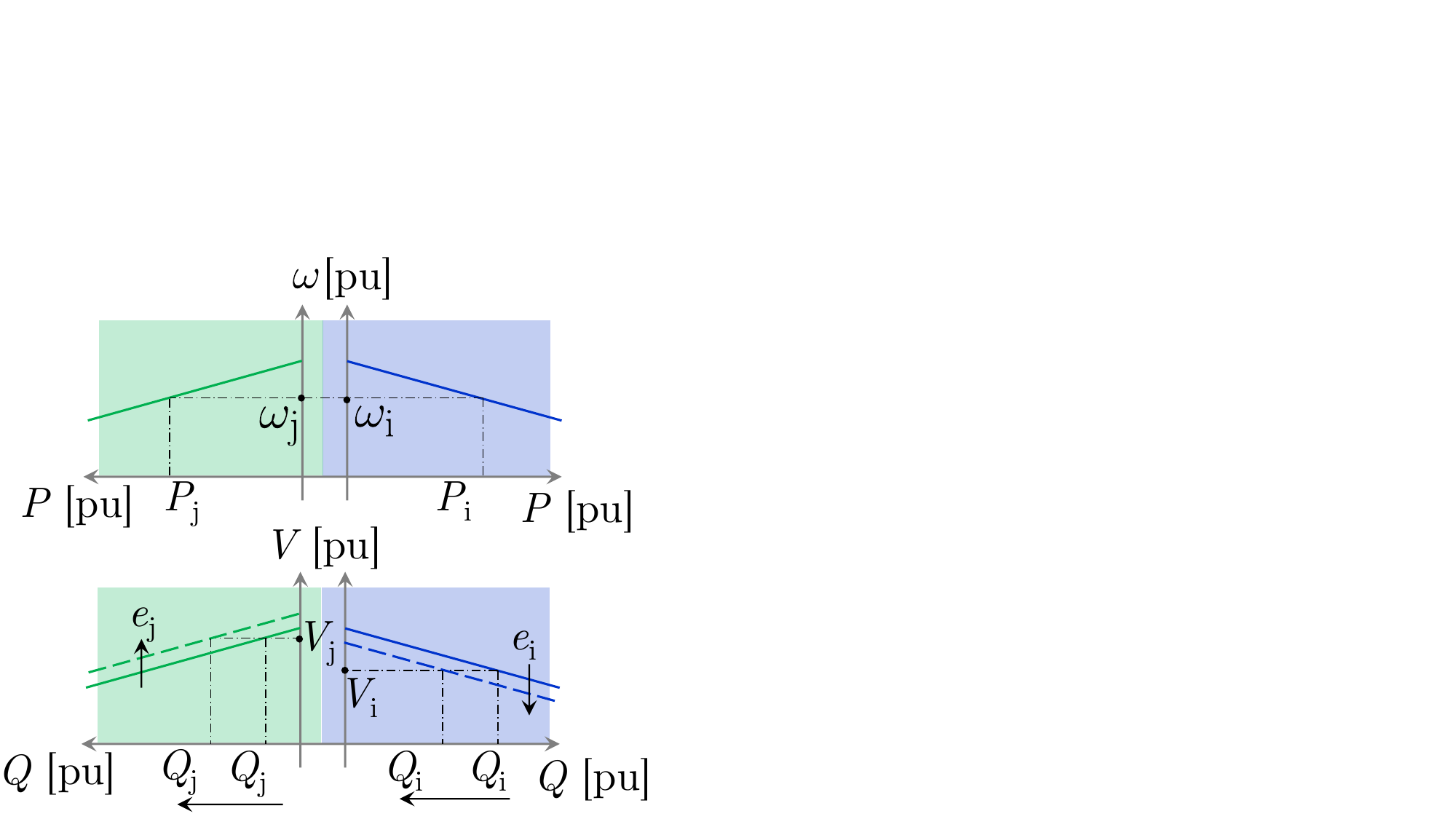}}%
    \label{fig:concept}
    \caption{(a) Distributed $\textit{secondary~control}$ architecture with $\textit{primary~control}$ for multiple GFM IBRs connected in parallel, (b) $P$$\sim$$f$, $Q$$\sim$$V$ droop-based $\textit{primary~control}$ with $\textit{secondary~control}$ for parallel GFM IBRs $i$ and $j$ with equal ratings, connected to PCC via reactive lines with $X_\mathrm{j} > X_\mathrm{i}$.}
    \label{fig:system_concept}
\end{figure}
\noindent In this section, secondary control is discussed from the scope of reactive power sharing control among multiple GFM IBRs in an islanded MG. Fig.~\ref{fig:system_concept}(a) shows a generic MG consisting of $N$ number of GFM IBRs connected to a common bus, point-of-common-coupling (PCC), via various lines. A complete description of the control typologies of the GFM IBRs is out of scope; a brief description is provided. Each GFM IBR is controlled via voltage controller-based $\textit{inner~control~loop}$ where the reference voltage signal is generated using the $P$$\sim$$f$/$Q$$\sim$$V$ droop control-based $\textit{primary~control}$. For inductive lines, the $\textit{primary~controller}$ without the $\textit{secondary~control}$ specify the $i^\mathrm{th}$ GFM IBR's frequency, $\omega_\mathrm{i}$, and voltage magnitude, $V_\mathrm{i}$ by:
\begin{align}\label{droop_law}
    \omega_\mathrm{i} = \omega^* - n_\mathrm{i}P_\mathrm{i},~V_\mathrm{i} = V^* - m_\mathrm{i}Q_\mathrm{i},
\end{align}
where $\omega^*$, $V^*$ are the nominal frequency (in rad/s) and voltage set-point (in volt) of the MG respectively, and the gains $n_\mathrm{i}$, $m_\mathrm{i}$ are the droop coefficients. $P_\mathrm{i}$ and $Q_\mathrm{i}$ are the measured average active and reactive power of the $i^\mathrm{th}$ GFM IBR, respectively. It is generated by processing the instantaneous active power, $p_\mathrm{i}$, and reactive power, $q_\mathrm{i}$, via a low-pass filters with the time constant, $\tau_\mathrm{S,i} \in \mathbb{R}_{>0}$. As a result, $P_\mathrm{i} :=  [1/(\tau_\mathrm{S,i}s+1)]p_\mathrm{i}$ and $Q_\mathrm{i} :=  [1/(\tau_\mathrm{S,i}s+1)]q_\mathrm{i}$. The large-signal stability
analysis of the MG system yields the steady-state network frequency as,
\begin{align}\label{ss_freq}
    \omega_\mathrm{ss} = \textstyle \omega^* - \left(\frac{1}{\sum_{i=1}^{n}(1/n_\mathrm{i})}\right) P_\mathrm{L},
\end{align}
where $P_\mathrm{L}$ is the total active power load in the MG. The steady-state frequency, $\omega_\mathrm{ss}$, being a global quantity of the MG system, is different from the nominal $\omega^*$, but equal in all the GFM IBRs. Thus, using~\eqref{droop_law} and~\eqref{ss_freq}, the following can be stated as,
\begin{align}
    n_1P_{1} = n_2P_{2} = \ldots = n_\mathrm{i}P_\mathrm{i} = \ldots = n_\mathrm{n}P_\mathrm{n}.
\end{align}
Therefore, the total active power demand, $P_\mathrm{L}$, is shared according to the selection of the $P$$\sim$$f$ droop coefficients. In other words, the per-unitized (pu) active power with each GFM IBR's base rating, is equal. This phenomenon can be understood using the top figure of Fig.~\ref{fig:system_concept}(b). However, the same analysis for the $Q$$\sim$$V$ droop controller is not as straightforward. It is well studied that due to the unequal line impedances, the $Q$$\sim$$V$ droop controller is unable to share reactive power demand among even identical GFM IBRs operating in parallel and as a result, 
\begin{align}\label{q_share_1}
    m_1Q_{1} \neq m_2Q_{2} \neq \ldots \neq m_\mathrm{i}Q_\mathrm{i} \neq \ldots \neq m_\mathrm{n}Q_\mathrm{n}.
\end{align}
For simplicity of exposition, a case study of two GFM IBRs with identical ratings ($i^\mathrm{th}$ and $j^\mathrm{th}$ GFM unit), but operating through reactive lines with $X_\mathrm{j} > X_\mathrm{i}$ in parallel is considered. The bottom figure of Fig.~\ref{fig:system_concept}(b) depicts the $Q$$\sim$$V$ droop law before the $\textit{secondary~control}$ action (solid lines of bottom figure of Fig.~\ref{fig:system_concept}(b)). The GFM IBRs operate at voltages $V_\mathrm{i}$ and $V_\mathrm{j}$ with reactive power injections $Q_\mathrm{i}$ and $Q_\mathrm{j}$. Since, $X_\mathrm{j} > X_\mathrm{i}$, the terminal voltages of the GFM IBRs follows $V_\mathrm{i}<V_\mathrm{j}$ and as a result, reactive power flow follows $Q_\mathrm{i}>Q_\mathrm{j}$. This is the limitation of employing only $\textit{primary~control}$ without the $\textit{secondary~control}$ compensation in parallel operation of multiple GFM IBRs. Moreover due to non-zero values of $Q_\mathrm{i}$'s and voltage drop across the line impedances, the voltage at PCC becomes less than nominal, $V^*$. Therefore, to ensure reactive power sharing among multiple GFM IBRs connected to the PCC via asymmetrical line impedances and desired voltage regulation at PCC, a dynamic compensation of $Q$$\sim$$V$ droop law is required as discussed in the next section.
\section{Distributed Reactive Power Sharing Scheme}\label{sec:formulation}
\subsection{Control/Update Rule Design} 
Consider a MG with $N$ GFM IBRs. Let the voltage and the reactive power at the $i^\mathrm{th}$ GFM IBR at any time be denoted as $V_\mathrm{i}(t)$ and $Q_\mathrm{i}(t)$ respectively. A droop law is employed for updating the voltage at the $i^\mathrm{th}$ GFM IBR any time $t+1$,
\begin{align}\label{eq:voltage_droop}
    V_\mathrm{i}(t+1) = V^* - m_\mathrm{i} Q_\mathrm{i}(t) + v_\mathrm{i}(t),
\end{align}
where $V^*$ is the voltage reference, $m_\mathrm{i}$ is the droop coefficient, and $v_\mathrm{i}(t)$ is the adjustment given at time $t$. The adjustment $v_\mathrm{i}(t)$ should either maintain equal-rated reactive power sharing between the GFM IBRs or tight regulation of the voltage at the GFM IBRs depending on the system requirement. We follow an optimization-oriented approach to design the adjustment $v_\mathrm{i}(t)$. Let the directed graph $\mathcal{G}$ represent the communication network of the GFM IBRs ($\mathcal{G}$ isn't necessarily the same as the power network connections). Let $\mathcal{G}(\{1,2,\dots,N\},\mathcal{E})$ denote a graph with the set of edges $\mathcal{E} \subset \{1,2,\dots,N\} \times \{1,2,\dots,N\}$. Let $\mathcal{N}_\mathrm{i}^{-} := \{j \ | \ (i,j)\in \mathcal{E}\}$ and $\mathcal{N}_\mathrm{i}^{+} := \{j \ | \ (j,i)\in \mathcal{E}\}$ respectively denote the set of in-neighbors and out-neighbors of node $i$ in  $\mathcal{G}$. Consider the  optimization problem at time $t$,
\begin{align}\label{eq:opt_prob}
    \minimize_{x_1, x_2, \dots, x_\mathrm{N}} & \ \textstyle \sum_{i=1}^N  \frac{1}{2}(x_\mathrm{i} - \alpha_\mathrm{i}(t))^2 + \frac{\gamma}{2} x_\mathrm{i}^2 \\
    \mbox{subject to} \ \alpha_\mathrm{i}(t):&= a_\mathrm{V_i}(V^* - V_\mathrm{i}(t)) \nonumber \\
                                        & + a_\mathrm{Q_i} m_\mathrm{i} Q_\mathrm{i}(t) \  \mbox{for all} \ i \in \{1,\dots,N\}, \label{eq:alpha_constraint} \\
                    x_\mathrm{i} &= x_\mathrm{j} \  \mbox{for all} \ i,j \in \{1,2,\dots,N\},\nonumber 
\end{align}
where $a_\mathrm{V_i}$ and $a_\mathrm{Q_i}$ are free parameters leading to different objective functions and $\gamma > 0 $ is a regularization parameter. Let $x_\mathrm{t}^* = x_1^*=\dots=x_\mathrm{N}^*$ denote the solution to the optimization problem~\eqref{eq:opt_prob}. Then the adjustment $v_\mathrm{i}(t)$ at time $t$, is given by the following control law, for all $i \in \{1,2,\dots, N\}$,
\begin{align}\label{eq:adjustmet_update}
    v_\mathrm{i}(t) := x_\mathrm{t}^* + \beta_\mathrm{Q_i} m_\mathrm{i} Q_\mathrm{i}(t) - \beta_\mathrm{V_i} (V^* - V_\mathrm{i}(t)), 
\end{align}
where $\beta_\mathrm{Q_i}$ and $\beta_\mathrm{V_i}$ are the free parameters that lead to different steady-state characteristics of the system as discussed later.
\begin{lemma}\label{lem:static_solution}
    Let graph $\mathcal{G}$ be strongly connected (i.e. $\mathcal{G}$ has a directed path between every pair of distinct nodes). Then, the solution $x_\mathrm{t}^*$ for problem~\eqref{eq:opt_prob} is 
    $x_\mathrm{t}^*:= \frac{1}{N(1+\gamma)} \sum_{i=1}^N \alpha_\mathrm{i}(t)$. 
\end{lemma}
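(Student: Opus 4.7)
The plan is to exploit the equality constraints to collapse the problem to a scalar, strictly convex, quadratic program whose unconstrained minimizer can be read off from the first-order optimality condition. Strong connectivity of $\mathcal{G}$ is not actually needed to \emph{characterize} the minimizer of~\eqref{eq:opt_prob} as stated (the consensus constraint $x_i = x_j$ is enforced over all pairs $i,j$); it becomes relevant only when solving the problem via a distributed protocol where agent $i$ only sees its neighbors, so I would mention this briefly but not use it in the derivation.

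First, I would introduce the common value $x := x_1 = x_2 = \dots = x_N$ permitted by the consensus constraint, thereby reducing~\eqref{eq:opt_prob} to the unconstrained scalar problem
\begin{align*}
    \minimize_{x \in \mathbb{R}} \ \Phi(x) := \sum_{i=1}^N \tfrac{1}{2}(x - \alpha_i(t))^2 + \tfrac{\gamma}{2} x^2 .
\end{align*}
Expanding the squares and collecting terms in $x$ gives
\begin{align*}
    \Phi(x) = \tfrac{N(1+\gamma)}{2} x^2 - x \sum_{i=1}^N \alpha_i(t) + \tfrac{1}{2}\sum_{i=1}^N \alpha_i(t)^2 .
\end{align*}

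Next, since $\gamma>0$ and $N\geq 1$, the leading coefficient $N(1+\gamma)/2$ is strictly positive, so $\Phi$ is strictly convex and coercive, hence admits a unique global minimizer. Setting $\Phi'(x)=0$ yields
\begin{align*}
    N(1+\gamma)\,x - \sum_{i=1}^N \alpha_i(t) = 0 ,
\end{align*}
which solves to the claimed expression $x_t^* = \tfrac{1}{N(1+\gamma)}\sum_{i=1}^N \alpha_i(t)$. Because this single scalar satisfies all constraints, the optimal vector is $x_1^* = \dots = x_N^* = x_t^*$.

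There is no real obstacle here since the problem, once the consensus constraint is substituted, is a one-variable strictly convex quadratic. The only remark I would add is that strong connectivity of $\mathcal{G}$ guarantees that even a sparse set of edge-wise constraints $x_i=x_j$ indexed by $\mathcal{E}$ would be equivalent to full consensus by transitivity along directed paths, which is what makes the closed-form value $x_t^*$ a meaningful target for the distributed secondary-control scheme developed in the rest of the paper.
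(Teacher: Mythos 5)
Your proposal is correct and follows essentially the same route as the paper's proof: substitute the consensus constraint to reduce~\eqref{eq:opt_prob} to a scalar unconstrained quadratic and read off the minimizer from the first-order condition $\sum_{i=1}^N[(1+\gamma)x - \alpha_\mathrm{i}(t)]=0$. Your added observations (strict convexity guaranteeing uniqueness, and strong connectivity being needed only for the distributed implementation rather than for this closed-form characterization) are accurate refinements but do not change the argument.
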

\begin{proof}
    Let $x_\mathrm{i} = x_\mathrm{j} = x, \forall i,j$. The separable form of the objective leads to an equivalent problem, with $\alpha_\mathrm{i}(t)$ as in  \eqref{eq:alpha_constraint},
    \begin{align}\label{eq:opt_prob_equivalent}
    \minimize_{x} &  \textstyle \  \sum_{i=1}^N  \frac{1}{2}(x - \alpha_\mathrm{i}(t))^2 + \frac{\gamma}{2} x^2.
    \end{align}
As problem~\eqref{eq:opt_prob_equivalent} is unconstrained we can differentiate the objective and set it to zero to determine the optimal solution. In particular, $\sum_{i=1}^N [ (1+\gamma) x_\mathrm{t}^* - \alpha_\mathrm{i}(t)] = 0$ which implies. $x_\mathrm{t}^* = \frac{1}{N(1+\gamma)} \sum_{i=1}^N \alpha_\mathrm{i}(t)$. Therefore, $x_\mathrm{i} = x_\mathrm{j} = x_\mathrm{t}^*$. 
\end{proof}
\begin{remark}
    Note that in problem~\eqref{eq:opt_prob} and the update~\eqref{eq:adjustmet_update} any GFM IBR $i$ only requires local measurements. Further, the GFM IBRs can choose the free parameters based on their local requirements without centralized constraints.
\end{remark}
\noindent Delaying the discussion on how problem~\eqref{eq:opt_prob} is solved, we first present the steady state analysis for the voltage and reactive power dynamics of the GFM IBRs. Assuming the overall system consisting of the loads and the IBRs has a steady state, let $t_\mathrm{ss}$ denote the time after which the system achieves a steady state. Therefore, for any time $t > t_\mathrm{ss}$, $V_\mathrm{i}(t) = V_\mathrm{i}(t_\mathrm{ss}), Q_\mathrm{i}(t) = Q_\mathrm{i}(t_\mathrm{ss})$. Then from~\eqref{eq:voltage_droop} and~\eqref{eq:adjustmet_update}, 
\begin{align*} 
& V_\mathrm{i}(t_\mathrm{ss}) = V^* - m_\mathrm{i}Q_\mathrm{i}(t_\mathrm{ss}) + v_\mathrm{i}(t_\mathrm{ss}) \\ 
& \hspace{-0.05in} =  V^* - m_\mathrm{i}Q_\mathrm{i}(t_\mathrm{ss}) + x_{\mathrm{t}_\mathrm{ss}}^* + \beta_\mathrm{Q_i} m_\mathrm{i} Q_\mathrm{i}(t_\mathrm{ss}) - \beta_\mathrm{V_i} (V^* - V_\mathrm{i}(t_\mathrm{ss})) \\
& \hspace{-0.05in} =  (1 - \beta_\mathrm{V_i}) V^* - ( 1 - \beta_\mathrm{Q_i}) m_\mathrm{i}Q_\mathrm{i}(t_\mathrm{ss}) + x_\mathrm{t_{ss}}^* + \beta_\mathrm{V_i} V_\mathrm{i}(t_\mathrm{ss}). 
\end{align*}
\begin{align} \label{eq:steady_state}
&\hspace{-0.02in} \mbox{Thus,} \ (1 - \beta_\mathrm{V_i}) V_\mathrm{i}(t_\mathrm{ss}) + ( 1 - \beta_\mathrm{Q_i}) m_\mathrm{i}Q_\mathrm{i}(t_\mathrm{ss}) = (1 - \beta_\mathrm{V_i}) V^*  \nonumber \\
& + x_\mathrm{t_{ss}}^* = (1 - \beta_\mathrm{V_i}) V^* + \textstyle \frac{1}{N(1+\gamma)} \big[ \sum_{l = 1}^N a_\mathrm{V_l}(V^* - V_\mathrm{l}(t_\mathrm{ss})) \nonumber \\
& \hspace{1.7in} \textstyle + 
\sum_{l = 1}^N a_\mathrm{Q_l} m_\mathrm{l} Q_\mathrm{l}(t_\mathrm{ss}) \big]. 
\end{align}
Note that different choices of the free design parameters achieve different objectives in the steady state. In particular,
\begin{itemize}
    \item [(i)] let $\beta_\mathrm{V_i} = \beta_\mathrm{V} = 1, \beta_\mathrm{Q_i} = \beta_\mathrm{Q} = 0$, for all $i \in \{1,2,\dots, N\}$. Then from~\eqref{eq:steady_state},  $m_\mathrm{i} Q_\mathrm{i}(t_\mathrm{ss}) = m_\mathrm{j} Q_\mathrm{j}(t_\mathrm{ss})$ for all $i,j \in \{1,2,\dots, N\}$. Thus, equal-rated reactive power sharing is achieved. 
    \item [(ii)] let $\beta_\mathrm{V_i} = \beta_\mathrm{V} = 0, \beta_\mathrm{Q_i} = \beta_\mathrm{Q} = 1, a_\mathrm{V_i} = a_\mathrm{Q_i} = 0$, for all $i \in \{1,2,\dots, N\}$. Then from~\eqref{eq:steady_state}, $V_\mathrm{i}(t_\mathrm{ss}) = V^*$ for all $i \in \{1,2,\dots, N\}$. Thus, voltage regulation to $V^*$ is achieved.
    \item [(iii)] a combination of the above cases can also be achieved, i.e. let $\beta_\mathrm{V_i} = \beta_\mathrm{V} = 1, \beta_\mathrm{Q_i} = \beta_\mathrm{Q} = 0$ for certain buses towards the objective of the equal-rated power reactive sharing, while, other buses have $\beta_\mathrm{V_i} = \beta_\mathrm{V} = 0, \beta_\mathrm{Q_i} = \beta_\mathrm{Q} = 1, a_\mathrm{V_i} = a_\mathrm{Q_i} = 0$ to maintain voltage regulation. 
\end{itemize}
\begin{remark}
    Problem~\eqref{eq:opt_prob} and update~\eqref{eq:adjustmet_update} 
    provide a versatile framework to allow the individual GFM IBRs to strike diverse trade-offs wherein, for example, a set of GFM IBRs might want to emphasize voltage regulation while others emphasize how reactive power is shared. This is the first control scheme with such a capability. We demonstrate this powerful ability of our framework via CHIL experiments in Section~\ref{sec:results}. 
\end{remark}
Next, we discuss how the solution to problem~\eqref{eq:opt_prob} is obtained. We first review the \textit{GradConsensus} method developed in \cite{GradConsensus} and then utilize it to present the algorithm to generate the update law~\eqref{eq:adjustmet_update} at any time $t$. 
\subsection{\textit{GradConsensus} Method}\label{sec:GradCons_review}
The \textit{GradConsensus} method developed in \cite{GradConsensus} solves a static distributed optimization problem of the form 
\begin{align}\label{eq:prob_cons_constraints}
  & \hspace{0.6in} \minimize\limits_{x_1,\dots,x_\mathrm{n}} \textstyle \ \sum_{i=1}^{n} f_\mathrm{i}(x_\mathrm{i})\\
  & \mbox{subject to} \ x_\mathrm{i} = \ x_\mathrm{j}, \  \mbox{for all} \ i,j \in \{1,2,\dots,N\}, \nonumber
\end{align}
where $x_\mathrm{i}$ is an estimate maintained locally at agent $i$. The local estimates of node $i$ at any iteration $k$ are updated as,
\begin{align}
    x_\mathrm{i}^{+} &:= x_\mathrm{i}(k-1) - \rho \nabla f_\mathrm{i}(x_\mathrm{i}(k-1)) \label{eq:gradcons_update1} \\
    x_\mathrm{i}(k) &= \mbox{Consensus}_{\varepsilon(k-1)}(x_\mathrm{i}^{+}), \label{eq:gradcons_update2}
\end{align}
where $\nabla f_\mathrm{i}(x_\mathrm{i}(k-1))$ is the gradient of the function $f_\mathrm{i}$ evaluated at $x_\mathrm{i}(k-1)$ and Consensus$_{\varepsilon}(y^0)$ is a distributed approximate consensus protocol initialized with the value $y^0 \in \mathbb{R}^N$ and a precision tolerance $\varepsilon$. Next, we describe the Consensus$_{\varepsilon}$ protocol. Under the Consensus$_{\varepsilon}$ protocol every node $i$ maintains estimates $y_\mathrm{i}, z_\mathrm{i}, r_\mathrm{i}, M_\mathrm{i}, m_\mathrm{i}$. The estimates $ y_\mathrm{i}, z_\mathrm{i}, r_\mathrm{i}$ have the following local distributed updates
\begin{align}
  y_{\mathrm{i}}(k+1) &= \textstyle \frac{1}{\mathcal{N}_\mathrm{i}^{+} + 1}y_{\mathrm{i}}(k)+\sum_{ j \in \mathcal{N}_\mathrm{i}^{-} } \hat{y}_{\mathrm{j}}(k)\label{eq:numerator}\\
  z_{\mathrm{i}}(k+1) &= \textstyle \frac{1}{\mathcal{N}_\mathrm{i}^{+} + 1} z_{\mathrm{i}}(k)+\sum_{ j \in \mathcal{N}_\mathrm{i}^{-} } \hat{z}_{\mathrm{j}}(k)\label{eq:denominator} \\
  r_\mathrm{i}(k+1) &= \textstyle \frac{1}{z_\mathrm{i}(k+1)}y_\mathrm{i}(k+1), \label{eq:ratio}
\end{align}
with $y_\mathrm{i}(0) = y^0_\mathrm{i}, z_\mathrm{i}(0) = 1$ for all $i\in \mathcal{E}$. Here, $\hat{y}_{\mathrm{j}}(k),\hat{z}_{\mathrm{j}}(k)$ are the estimates sent by neighboring node $j$ to agent $i$ with  $\hat{y}_{\mathrm{j}}(k) = \frac{1}{\mathcal{N}_\mathrm{j}^{+} + 1} y_\mathrm{j}(k), \hat{z}_{\mathrm{j}}(k) = \frac{1}{\mathcal{N}_\mathrm{j}^{+} + 1} z_\mathrm{j}(k)$.
The estimates $M_\mathrm{i}, m_\mathrm{i}$ are updated as: Let $ \mathcal{N}_\mathrm{i}^u = \mathcal{N}_\mathrm{i}^{-} \cup \{\mathrm{i}\}$,
\begin{align}
    M_\mathrm{i}(k+1) &= \max_{j \in \mathcal{N}_\mathrm{i}^u} M_\mathrm{j}(k), \ m_\mathrm{i}(k+1) = \min_{j \in \mathcal{N}_\mathrm{i}^u } m_\mathrm{j}(k), \label{eq:max_min_update}
\end{align}
with $M_\mathrm{i}(0)  = m_\mathrm{i}(0) = r_\mathrm{i}(0)$, for all $i \in \mathcal{E}$. Given a tolerance $\varepsilon$, iterations~\eqref{eq:numerator}-\eqref{eq:ratio} are terminated after $k_\varepsilon$ such that $|M_\mathrm{i}(k) - m_\mathrm{i}(k)| \leq \varepsilon$, for $k > k_\varepsilon$ \cite{GradConsensus}. It is established in \cite{GradConsensus} that at any iteration $k_\mathrm{end}$ when the \textit{GradConsensus} algorithm is terminated the estimates $r_\mathrm{i}(k_\mathrm{end})$ and $r_\mathrm{j}(k_\mathrm{end})$ of any two agents $i$ and $j$ are such that $|r_\mathrm{i}(k_\mathrm{end}) - r_\mathrm{j}(k_\mathrm{end})| \leq \varepsilon$ and $|r_\mathrm{i}(k_\mathrm{end}) - \frac{1}{N} \sum_{j=1}^N y^0_\mathrm{j}| \leq \varepsilon$ for all $i,j \in \{1,2,\dots,N\}.$
\begin{remark}
    The iterations~\eqref{eq:numerator}-\eqref{eq:max_min_update} are amenable to distributed synthesis (\cite{ADMM_tac}) as a GFM IBR only requires the communicated estimates $\hat{y}_\mathrm{j}, \hat{z}_\mathrm{j}$ from its neighbors for updating its states, no additional coordination is required. Further, as any GFM IBR $j$ joins or leaves the network, GFM IBR $i$ can update its neighborhood sets $\mathcal{N}_\mathrm{i}^{-},\mathcal{N}_\mathrm{i}^{+}$ locally not requiring a centralized re-design. Hence, the GFM IBRs under the proposed algorithm operate in a plug-and-play manner. 
\end{remark}
\subsection{Solution to Problem~\eqref{eq:opt_prob}}\label{sec:opt_prob_sol}
Problem~\eqref{eq:opt_prob} is instantiated at every time $t$. The constraint $x_\mathrm{i} = x_\mathrm{j}$ for all $i,j \in \{1,2,\dots, N\}$ couples the local estimates of all the GFM IBRs. Thus, a \textit{distributed} strategy that allows the GFM IBRs to obtain an estimate of the solution to problem~\eqref{eq:opt_prob} while satisfying the coupling constraint (within a prescribed tolerance) is needed. We adopt the iterations~\eqref{eq:gradcons_update1} and~\eqref{eq:gradcons_update2} to provide such a distributed solution. Let $x_\mathrm{i}(t)$ denote the estimate maintained at GFM IBR $i$ at time $t$ and $\varepsilon(t) > 0$ be any specified precision tolerance. Then the estimate $x_\mathrm{i}(t)$ is updated as
\begin{align}
    x_\mathrm{i}^{+} :&= x_\mathrm{i}(t-1) + \rho( (1+\gamma) x_\mathrm{i}(t-1) - \alpha_\mathrm{i}(t-1)) , \label{eq:gradcons_opt_update1} \\
    x_\mathrm{i}(t) &= \mbox{Consensus}_{\varepsilon(t-1)}(x_\mathrm{i}^{+}), \label{eq:gradcons_opt_update2}
\end{align}
At any time $t$, $x_\mathrm{i}(t)$ is the estimate of the optimal solution $x_t^*$ with GFM IBR $i$. Thus, having $x_\mathrm{i}(t)$ at hand any GFM IBR $i$ can determine the adjustment $v_\mathrm{i}(t)$ at time $t$ as, 
\begin{align}\label{eq:adjustment_update_online}
    v_\mathrm{i}(t) := x_\mathrm{i}(t) + \beta_\mathrm{Q_i} m_\mathrm{i} Q_\mathrm{i}(t) - \beta_\mathrm{V_i} (V^* - V_\mathrm{i}(t)). 
\end{align}
Next, we present a convergence result for updates~\eqref{eq:gradcons_opt_update1}-\eqref{eq:adjustment_update_online}. 
\begin{theorem}\label{thm:convproof}
Let graph $\mathcal{G}$ be strongly connected (i.e. $\mathcal{G}$ has a directed path between every pair of distinct nodes). Assume, that the iterations~\eqref{eq:gradcons_opt_update1}-\eqref{eq:adjustment_update_online} achieve a steady state after time $t_\mathrm{ss}$. Given, $\varepsilon >0$, $| x_\mathrm{i}(t) - x^*_\mathrm{t} | \leq \varepsilon$, for $t \geq t_\mathrm{ss}, \forall i$.
\end{theorem}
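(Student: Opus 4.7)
The plan is to exploit the steady-state hypothesis to reduce Theorem~\ref{thm:convproof} to a fixed-point argument for the recursion~\eqref{eq:gradcons_opt_update1}--\eqref{eq:gradcons_opt_update2}, and then to apply the two guarantees of the Consensus$_{\varepsilon}$ primitive recalled at the end of Section~\ref{sec:GradCons_review}.

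First, I would freeze the data. For $t \geq t_\mathrm{ss}$ the hypothesis fixes $V_\mathrm{i}(t), Q_\mathrm{i}(t), x_\mathrm{i}(t)$, and hence $\alpha_\mathrm{i}(t)$ through~\eqref{eq:alpha_constraint}, at time-independent values $V_\mathrm{i}^{\mathrm{ss}}, Q_\mathrm{i}^{\mathrm{ss}}, x_\mathrm{i}^{\mathrm{ss}}, \alpha_\mathrm{i}^{\mathrm{ss}}$. The gradient step~\eqref{eq:gradcons_opt_update1} then produces a time-independent $x_\mathrm{i}^{+,\mathrm{ss}} := x_\mathrm{i}^{\mathrm{ss}} + \rho\bigl((1+\gamma)x_\mathrm{i}^{\mathrm{ss}} - \alpha_\mathrm{i}^{\mathrm{ss}}\bigr)$, and the consensus step~\eqref{eq:gradcons_opt_update2} becomes $x_\mathrm{i}^{\mathrm{ss}} = \mathrm{Consensus}_{\varepsilon(t-1)}(x_\mathrm{i}^{+,\mathrm{ss}})$.

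Next, I would invoke the two Consensus$_{\varepsilon}$ guarantees quoted from \cite{GradConsensus}. The agreement bound $|x_\mathrm{i}^{\mathrm{ss}} - x_\mathrm{j}^{\mathrm{ss}}| \leq \varepsilon(t-1)$, averaged over $j$, gives $|x_\mathrm{i}^{\mathrm{ss}} - \bar{x}^{\mathrm{ss}}| \leq \varepsilon(t-1)$ where $\bar{x}^{\mathrm{ss}} := \frac{1}{N}\sum_\mathrm{j} x_\mathrm{j}^{\mathrm{ss}}$. The approximate-average bound applied to the initialization $x_\mathrm{j}^{+,\mathrm{ss}}$ gives $\bigl|x_\mathrm{i}^{\mathrm{ss}} - \frac{1}{N}\sum_\mathrm{j} x_\mathrm{j}^{+,\mathrm{ss}}\bigr| \leq \varepsilon(t-1)$, and averaging the gradient step across $j$ yields $\frac{1}{N}\sum_\mathrm{j} x_\mathrm{j}^{+,\mathrm{ss}} = \bar{x}^{\mathrm{ss}} + \rho\bigl((1+\gamma)\bar{x}^{\mathrm{ss}} - \bar{\alpha}^{\mathrm{ss}}\bigr)$ with $\bar{\alpha}^{\mathrm{ss}} := \frac{1}{N}\sum_\mathrm{j}\alpha_\mathrm{j}^{\mathrm{ss}}$. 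Combining the two displays gives $\bigl|\rho\bigl((1+\gamma)\bar{x}^{\mathrm{ss}} - \bar{\alpha}^{\mathrm{ss}}\bigr)\bigr| \leq 2\varepsilon(t-1)$, hence $\bigl|\bar{x}^{\mathrm{ss}} - \bar{\alpha}^{\mathrm{ss}}/(1+\gamma)\bigr| \leq 2\varepsilon(t-1)/\bigl(\rho(1+\gamma)\bigr)$. Lemma~\ref{lem:static_solution} identifies $x_\mathrm{t}^* = \bar{\alpha}^{\mathrm{ss}}/(1+\gamma)$, so a triangle inequality produces $|x_\mathrm{i}^{\mathrm{ss}} - x_\mathrm{t}^*| \leq \bigl(1 + 2/\bigl(\rho(1+\gamma)\bigr)\bigr)\,\varepsilon(t-1)$.

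The argument closes by scheduling the per-iteration consensus tolerance so that $\varepsilon(t-1)$ absorbs the conversion factor $1 + 2/\bigl(\rho(1+\gamma)\bigr)$ into the target $\varepsilon$, which yields $|x_\mathrm{i}(t) - x_\mathrm{t}^*| \leq \varepsilon$ for every $i$ and every $t \geq t_\mathrm{ss}$. The main obstacle I anticipate is precisely this blow-up factor: if the regularization $\gamma$ or the step $\rho$ are chosen small, the per-step tolerance $\varepsilon(t)$ has to be correspondingly tight, and this is the only quantitative link the proof needs between the user-specified $\varepsilon$ and the scheduled consensus tolerances. Strong connectivity of $\mathcal{G}$ enters only through its role in underwriting the Consensus$_{\varepsilon}$ guarantees and Lemma~\ref{lem:static_solution}, both of which are assumed in their stated form.
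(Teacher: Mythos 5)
Your proposal is correct and follows essentially the same route as the paper: freeze the steady state, use the two Consensus$_{\varepsilon}$ guarantees to pin the network average of the iterates to $\bar{\alpha}^{\mathrm{ss}}/(1+\gamma)$ via the fixed-point condition, identify this with $x_\mathrm{t}^*$ through Lemma~\ref{lem:static_solution}, and close with a triangle inequality. If anything you are more careful than the paper, whose final step bounds $\bigl|\tfrac{1-\rho(1+\gamma)}{N\rho(1+\gamma)}\sum_{i}\xi_\mathrm{i}(t)\bigr|$ by $\tfrac{1}{N}\sum_{i}|\xi_\mathrm{i}(t)|$ --- valid only when $\rho(1+\gamma)\geq 1/2$ --- whereas you keep the $1+2/(\rho(1+\gamma))$ blow-up factor explicit and correctly note that the per-step tolerance schedule must absorb it.
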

\begin{proof}
    From~\eqref{eq:gradcons_opt_update1} and~\eqref{eq:gradcons_opt_update2}, at any time $t$ and any $i$,
    \begin{align}\label{eq:dummy1}
       x_\mathrm{i}(t+1) = \textstyle \frac{(1 + \rho(1+\gamma))}{N}\sum_{j=1}^N x_\mathrm{j}(t) - \frac{\rho}{N} \sum_{i=1}^N\alpha_\mathrm{i}(t) + \xi_\mathrm{i}(t),
    \end{align}
    where $|\xi_\mathrm{i}(t)| \leq \varepsilon(t)$. Taking the average over node index $i$,
    \begin{align*}
       \textstyle \frac{1}{N} \sum_{i=1}^N x_\mathrm{i}(t+1) & = \textstyle \frac{(1 + \rho(1+\gamma))}{N}\sum_{j=1}^N x_\mathrm{j}(t) \\
        & \hspace{0.3in} \textstyle - \frac{\rho}{N} \sum_{i=1}^N\alpha_\mathrm{i}(t) + \frac{1}{N} \sum_{i=1}^N \xi_\mathrm{i}(t).
    \end{align*}
    For $t > t_\mathrm{ss}, \frac{1}{N} \sum_{i=1}^N x_\mathrm{i}(t_\mathrm{ss}) = \frac{(1 + \rho(1+\gamma))}{N}\sum_{j=1}^N x_\mathrm{j}(t_\mathrm{ss}) - \frac{\rho}{N} \sum_{i=1}^N\alpha_\mathrm{i}(t) + \frac{1}{N} \sum_{i=1}^N \xi_\mathrm{i}(t).$ 
    
    \noindent Therefore, from~\eqref{eq:dummy1}, $\frac{ \rho(1+\gamma)}{N}\sum_{j=1}^N x_\mathrm{j}(t_\mathrm{ss})- \textstyle \frac{\rho}{N} \sum_{i=1}^N\alpha_\mathrm{i}(t) = x_\mathrm{i}(t_\mathrm{ss}) - \frac{1}{N}\sum_{j=1}^N x_\mathrm{j}(t_\mathrm{ss}) - \xi_t(t_\mathrm{ss}) =  - \frac{1}{N} \sum_{i=1}^N \xi_\mathrm{i}(t).$ Thus,
    \begin{align*}
        \textstyle x_\mathrm{i}(t_\mathrm{ss}) &= \textstyle \frac{1}{N}\sum_{j=1}^N x_\mathrm{j}(t_\mathrm{ss}) + \xi_\mathrm{i}(t) - \frac{1}{N} \sum_{i=1}^N \xi_\mathrm{i}(t)\\
        & = \textstyle \frac{\rho (\sum_{i=1}^N\alpha_\mathrm{i}(t) )}{N (\rho(1+\gamma))} - \frac{\sum_{i=1}^N \xi_\mathrm{i}(t) }{N(\rho(1+\gamma))} + \xi_\mathrm{i}(t) - \frac{\sum_{i=1}^N \xi_\mathrm{i}(t)}{N} 
    \end{align*}
    Let $\varepsilon(t) < \varepsilon/2$ for $t > t_\mathrm{ss}$, using the result of Lemma~\ref{lem:static_solution},
    \begin{align*}
        & \textstyle | x_\mathrm{i}(t_\mathrm{ss}) - x_\mathrm{t}^* | =  \left | x_\mathrm{i}(t_\mathrm{ss}) - \frac{\sum_{i=1}^N\alpha_\mathrm{i}(t) }{N (1+\gamma)} \right | \\
        & \textstyle  \leq \left | \frac{1-\rho(1+\gamma)}{N\rho(1+\gamma)} \sum_{i=1}^N \xi_\mathrm{i}(t) + \xi_\mathrm{i}(t) \right| \leq \frac{1}{N} \sum_{i=1}^N |\xi_\mathrm{i}(t)| + |\xi_\mathrm{i}(t)| 
    \end{align*}  
    \hspace{0.03in} $\leq \textstyle \frac{1}{N} \sum_{i=1}^N \varepsilon(t) + \varepsilon(t) \leq \varepsilon.$  
\end{proof}
\begin{remark}\label{rem:imperfect_scenarios}
During updates~\eqref{eq:gradcons_opt_update1}-\eqref{eq:adjustment_update_online} the GFM IBRs share a non-linear estimate of their measurements, obscuring the private information and hence preventing malicious cyber-attacks. Moreover, the iterations~\eqref{eq:numerator}-\eqref{eq:max_min_update} can be easily modified to work under various practical scenarios incorporating time-delays, dynamic GFM IBR topology \cite{switching_top}, presence of communication noise \cite{nrpushsum}. Further, the detection of the malicious behaving GFM IBRs can also be achieved \cite{intruder_codit}. 
\end{remark}
\renewcommand{\arraystretch}{1.2}
\begin{table}[b]
\centering
\caption{$1$-PHASE GFM IBR UNDER STUDY}
\label{table:data}
\scriptsize
\begin{tabular}{c|c}
\hline 
$\mathbf{Inverter}$ & $\mathbf{Value}$    \\ \hline 
$\mathrm{Ratings}$ ($1$-$\phi$) & $240~\mathrm{V}$, $60~\mathrm{Hz}$, $200~\mathrm{kVA}$ \\ \hline
$\mathrm{Inverter~Parameters}$ & $V_\mathrm{dc}$ = $500~\mathrm{V}$, $f_\mathrm{sw}$ = $20~\mathrm{kHz}$ \\ \hline
$\mathrm{Filter~Parameters}$ & $L_\mathrm{f}$ = $0.02~\mathrm{mH}$, $R_\mathrm{f}$ = $4$~$\mathrm{m}\Omega$, $C_\mathrm{f}$ = $20~\mu\mathrm{F}$ \\ \hline \hline
$\mathbf{Control}$ & $\mathbf{Value}$ \\ \hline 
$\mathrm{Droop~Parameters}$ & $n_i$ = $0.32$~$\mathrm{Hz/kW}$, $m_i$ = $40$~$\mathrm{V/kVAr}$ \\ \hline
\end{tabular}
\end{table}

\section{CHIL Demonstration and Results}\label{sec:results}
In this section, we will evaluate the performance of our approach using a CHIL-based demonstration. 
\subsection{Experimental Configuration}
\begin{figure}[b]
    \centering
    \includegraphics[scale=0.25,trim={0cm 0cm 3.5cm 0cm},clip]{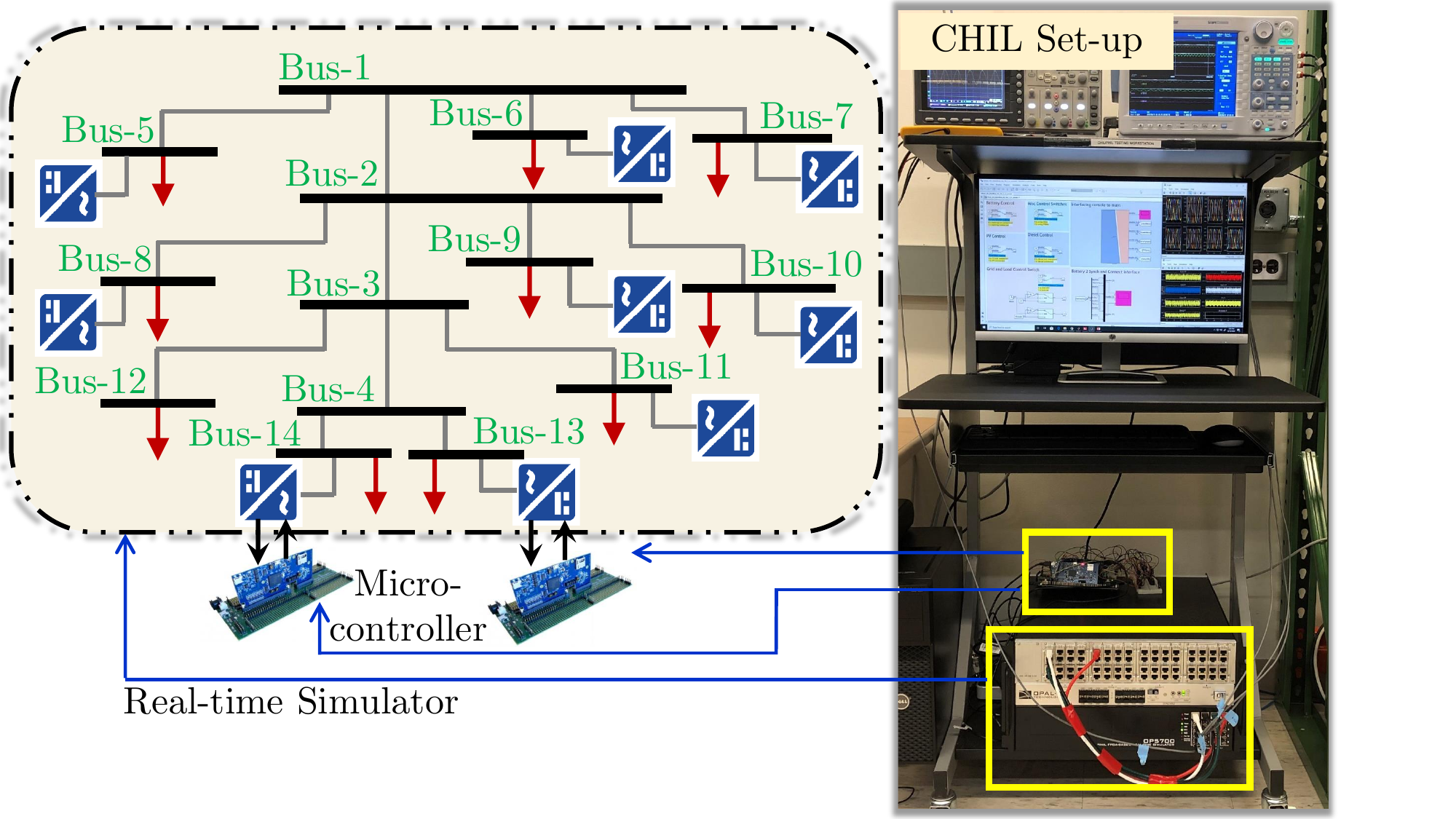}
    \caption{The laboratory-based system-in-the-loop (SIL) and controller hardware-in-the-loop (CHIL) experimental hardware setup.}
    \label{fig:test_system}
\end{figure}
We emulate the residential-scale sub-network of the North American low-voltage distribution feeder (single phase, $240$V, $60$Hz, $1.25$MVA) from CIGRE Task Force \cite{strunz2009developing} (see Fig.~\ref{fig:test_system}). It has $10$ GFM IBRs connected at various buses in the network, resulting in an asymmetrical connection. Fig.~\ref{fig:test_system} provides the location of the GFM IBRs. In this demonstration, we assume that each GFM IBR has equal ratings. The control logic for the inner-voltage loop and droop-based primary layer are adopted from \cite{sohamGFM1,sohamGFM2}. GFM IBRs are modeled with single-phase H-bridge topology with $\mathrm{LC}$ filters ($L_\mathrm{f}$, $R_\mathrm{f}$, and $C_\mathrm{f}$ as inverter side filter inductors, parasitic resistances and filter capacitor respectively) with electrical parameters,m tabulated in Table~\ref{table:data}. The droop control parameters of the GFM IBRs are tabulated in Table~\ref{table:data}. On top of the power layer, there is a communication layer that enables the transmission of various measurements, from the North American low-voltage distribution feeder to the optimization engine, and various control signals, from the optimization engine to the North American low-voltage distribution feeder. There are three important components in the demonstration: i) the North American low-voltage distribution feeder and the GFM IBRs, emulated using the eMEGASIM platform with a simulation step size of $50 \mu$s inside the OP$5700$ RT-simulator (RTS), which is manufactured by OPAL-RT and interfaced with two low-cost Texas Instruments TMS$320$F$28379$D, $16$/$12$-bit floating-point $200$-MHz Delfino microcontroller boards for controlling two of the GFM IBRs; ii) the optimization engine of the IBRs running $10$ independent instances via $10$ parallel processes running the optimization iterations in Python $3.7.1$ on a laptop with $16$ GB RAM and an Intel Core i$7$ processor running at $1.90$ GHz. The GFM IBRs are connected via a directed communication network having a diameter of $5$ nodes; and iii) a Device Control Gateway (DCG), realized using standard User Datagram Protocol (UDP) \cite{udp}, that interfaces the emulated system and the optimization engine and forms the communication layer that continuously listens to the measurements sent from the power network and adjustment commands from the optimization engine. 

\subsection{Results and Discussions}
Three test cases are demonstrated by emulating a sequence of events as follows:\\
$\bullet~\mathtt{CASE}$-$\mathtt{1}$: GFM IBRs operated without secondary control compensation until about $4.5$ s. Then, at $4.5$ s, distributed secondary control began, evenly sharing reactive power among GFM IBRs $1$ to $10$. At $13.5$ s, system demand rose from $1000$ kW to $1250$ kW for active power and from $750$ kVAr to $1125$ kVAr for reactive power. At $24$ s, demand decreased back to $1000$ kW for active power and $750$ kVAr for reactive power.\\
$\bullet~\mathtt{CASE}$-$\mathtt{2}$: Until approximately $23$ seconds, all GFM IBRs operated without the proposed secondary control compensation. Then, at $23$ seconds, distributed secondary control for voltage regulation among GFM IBRs $1$ to $10$ began. By $34$ seconds, the total system demand increased to $1250$ kW for active power and $1125$ kVAr for reactive power from $1000$ kW and $750$ kVAr respectively. Subsequently, at $47$ seconds, the total system demand decreased back to $1000$ kW for active power and $750$ kVAr for reactive power.\\
$\bullet~\mathtt{CASE}$-$\mathtt{3}$: Until around $6$ seconds, all GFM IBRs operated without the proposed secondary control compensation. Then, at $6$ seconds, distributed secondary control began, with GFM IBRs $1$ to $7$ equally sharing reactive power, while GFM IBRs $8$ to $10$ regulated voltage. By $11.5$ seconds, the total system demand increased to $1250$ kW for active power and $1125$ kVAr for reactive power from $1000$ kW and $750$ kVAr respectively. Subsequently, at $17$ seconds, the total system demand decreased back to $1000$ kW for active power and $750$ kVAr for reactive power.
\par Figs.~\ref{fig:combined_results}(i)-(iv) provide the results of $\mathtt{CASE}$-$\mathtt{1}$. Here, the GFM IBRs share the total reactive power demand by providing an equal amount of $75$ kVAr ($0.375$ pu) during $4.5$ to $13.5$ seconds during the reactive power sharing objective. The developed scheme tracks both the increase and the decrease in load with all the GFM IBRs sharing reactive power equally as shown in Fig.~\ref{fig:combined_results}(ii). The voltages at the GFM IBRs however are not tightly regulated as seen in Figs.~\ref{fig:combined_results}(iii). Fig.~\ref{fig:combined_results}(i) and Fig.~\ref{fig:combined_results}(iv) show that the proposed secondary control is not influencing the active power sharing functionality and as result the frequency of the system is also kept almost unchanged with minimal transients. 
\par Figs.~\ref{fig:combined_results}(v)-(viii) provide the results of $\mathtt{CASE}$-$\mathtt{2}$. It can be seen from Fig.~\ref{fig:combined_results}(vii) that the voltages of the GFM IBRs are tightly regulated around the reference voltage under $\mathtt{CASE}$-$\mathtt{2}$. The total reactive power is not shared equally among the GFM IBRs as seen in Fig.~\ref{fig:combined_results}(vi). Similarly here also, as shown in Fig.~\ref{fig:combined_results}(v) and Fig.~\ref{fig:combined_results}(viii), the proposed secondary control is not influencing the active power sharing functionality and the system frequency has minimal transients.
\par Figs.~\ref{fig:combined_results}(ix)-(xii) present the system state under $\mathtt{CASE}$-$\mathtt{3}$. Here, three GFM IBRs (GFMs $8-10$) have unequal sharing of reactive power while all the remaining seven GFM IBRs (GFMs $1-7$) have their voltages tightly regulated. This shows the versatility of our proposed scheme in achieving both voltage regulation and equal-rated reactive power-sharing objectives. Similarly, the control doesn't not adversely affect the active power sharing of the GFM IBRs and the frequency of the system is maintained around the nominal value of $60$ Hz as seen in Figs.~\ref{fig:combined_results}(ix) and Figs.~\ref{fig:combined_results}(xii).
\begin{figure*}[t]
    \centering
    \includegraphics[scale=0.85,trim={0.1cm 0cm 0.1cm 0.1cm},clip]{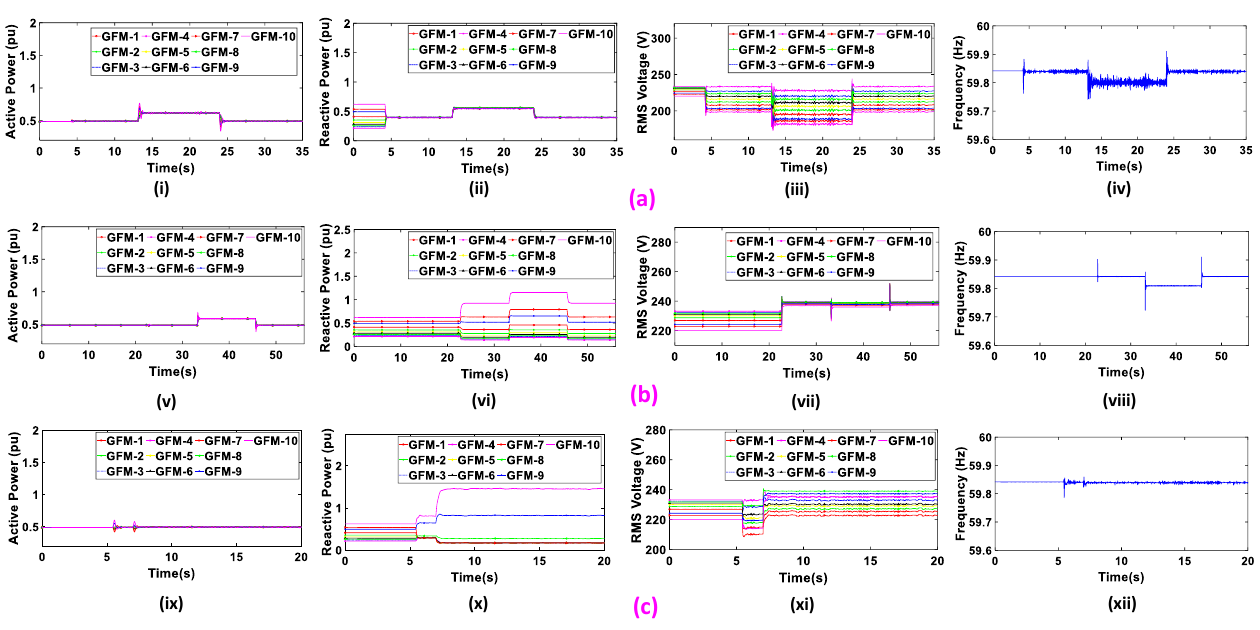}
    \caption{Performance results of the developed distributed secondary controller. (a) Under $\mathtt{CASE}$-$1$: equal-rated reactive power sharing, (b) Under $\mathtt{CASE}$-$2$: tight voltage regulation, (c) Under $\mathtt{CASE}$-$3$: mixed objective DERs $1$-$7$ equal-rated reactive power sharing and DERs $8$-$10$ tight voltage regulation.}
    \label{fig:combined_results}
\end{figure*}

\section{Conclusion and Future Work}\label{sec:conclusion}
In this article, we considered a secondary control problem in MGs with heterogeneous GFM IBRs connected via varying line impedances. In contrast to the literature, we proposed a fully distributed controller scheme that can be both synthesized and implemented at each IBR using only locally available information. The salient features of the proposed scheme are its plug-and-play operation, versatility to achieve different control objectives, and privacy-preserving updates preventing the dissemination of private information between different GFM IBRs. We presented a convergence analysis establishing the voltage regulation and reactive power sharing properties. The CHIL experiment results corroborate the efficacy of the proposed distributed controller in secondary control scenarios.
We remark that the developed distributed controller can be extended to the scenarios of communication delays and dynamic interconnection between the GFM DERs and variations of line impedances between the DERs. However, an exact characterization is beyond the scope of the current article and is pursued as a future direction of research.

\bibliography{references}

\end{document}